\newtheorem{definition}{Definition}
\newtheorem{assumption}{Assumption}
\newtheorem{lemma}{Lemma}
\newtheorem{proposition}{Proposition}
\newtheorem{theorem}{Theorem}
\begin{document}
	\title{Agentic Fog:
		A Policy-driven Framework for Distributed Intelligence in Fog Computing}
	\author{Saeed Akbar \orcidlink{0000-0002-7093-9318}, Muhammad Waqas \orcidlink{0000-0001-6489-2819}, and Rahmat Ullah \orcidlink{0000-0001-5162-5164} 
		
		\thanks{Saeed Akbar is with the Department of Computer Science \& Information Technology, Sarhad University of Science \& Information Technology, PK (email: saedakbr@suit.edu.pk).}
		\thanks{Muhammad Waqas is with the Faculty of Electrical Engineering and Computer Science, GIK Institute of Engineering Sciences \& Technology, Topi, Swabi, PK.}    
		\thanks{Rahmat Ullah is with the School of Computer Science \& Electronic Engineering, University of Essex, United Kingdom.}
	}

	\IEEEtitleabstractindextext{
		\begin{abstract}
			Fog and edge computing require adaptive control schemes that can handle partial observability, severe latency requirements, and dynamically changing workloads. Recent research on Agentic AI (AAI) increasingly integrates reasoning systems powered by Large Language Models; however, these tools are not applicable to infrastructure-level systems due to their high computational cost, stochastic nature, and poor formal analyzability. In this paper, a generic model, Agentic Fog (AF), is presented, in which fog nodes are represented as policy-driven autonomous agents that communicate via p2p interactions based on shared memory and localized coordination. The suggested architecture decomposes a system's goals into abstract policy guidance and formalizes decentralized fog coordination as an exact potential game. The framework is guaranteed to converge and remain stable under asynchronous updates, bounded-rational best-response dynamics, and node failures. Simulations demonstrate that the AF system achieves lower average latency and adapts more efficiently to varying demand than greedy heuristics and integer linear programming under dynamic conditions. The sensitivity analysis also demonstrates the capability to perform optimally under different memory and coordination conditions. 
		\end{abstract}

		\begin{IEEEkeywords}
			Agentic AI, Agentic Fog, Fog Computing, Multi-agent System, Distributed Intelligence, Mesh Networks, Autonomous Agents.
		\end{IEEEkeywords}
	}
	\maketitle
	\IEEEdisplaynontitleabstractindextext
	
	\section{Introduction}
	
	Fog computing extends cloud computing by moving cloud-based services to the network edge, thereby reducing latency, increasing bandwidth usage, and supporting context-aware computation \cite{alsharif2025survey}. The proximity of storage, computation and control infrastructure to end users offered by fog architecture can serve latency-sensitive and data-intensive applications that are underserved by centralized cloud paradigms \cite{vinueza2024fog, yadav2024efficient}. Fog deployment based on a mesh topology facilitates peer-to-peer (p2p) coordination among Fog Nodes (FNs), thereby reducing dependence on centralized controllers. This makes these systems resilient to failures and may contribute to efficiency gains \cite{wu2024topology}.
	
	Despite the aforementioned developments, most modern fog computing systems are still essentially reactive. Resource allocation, routing of requests, and load balancing are largely controlled by either heuristic approaches, local greedy optimization schemes, or centrally coordinated optimization models that are recalculated periodically based on the assumption of global observability \cite{khaledian2024ai, taleb2025survey}. These techniques struggle to cope with the inherent uncertainty, limited visibility, and stochastic nature of the real-world fog environments \cite{jin2025dynamic}.
	
	These restrictions appear in three critical dimensions. First, the implicit assumptions of centralized and quasi-centralized control structures are near-complete knowledge of system state, which is unsustainable in large-scale, heterogeneous, and failure-prone fog deployments \cite{taleb2025survey}. Second, optimization designs for decentralized fog \cite{mann2022decentralized} do not focus on long-term behavior, making them ineffective under shifting demand patterns \cite{taleb2025survey}. Third, Integer Linear Programming (ILP) and related formulations exhibit poor scalability and result in high computational overhead under dynamic workloads \cite{lera2024multi}.
	
	System-level intelligence emerges from coordinated agent actions \cite{dwivedi2025agentic}. 
	Early agent-based systems were reactive and had limited scope as they utilized predefined rules and static knowledge-bases to operate \cite{sapkota2025ai}. Recent studies have focused on making agents adaptable using statistical learning and Reinforcement Learning (RL). The main limitation of these models is that they are confined to a single task. To tackle the said issue, large foundation models emerged that enable complex workflows incorporating contextual reasoning and external tool usage. Although AI agents are capable of performing multi-step execution, they lack persistent goals and have limited autonomy \cite{schneider2025generative}. AAI is an emerging paradigm where autonomous agents having partial knowledge of the system interact with each other using shared memory and policy alignment to attain distributed intelligence. AAI allows the AI agents to learn from experience and refine their behavior through iterative policy updates, leading to emergent intelligence beyond isolated tasks \cite{bandi2025rise, murugesan2025rise}.
	
	Recent investigations focus primarily on LLM-driven agents \cite{zheng2026agentvne, dell2025agentic}; however, LLM-based agentic systems are not appropriate for  infrastructure-level systems, such as fog computing \cite{abdelzaher2025bottlenecks} due to their computational overhead, and stochastic decision-making behavior. Computation efficiency, predictability, and formal analyzability are the main requirements for such systems. Hence, there is a need for non-LLM-driven AAI framework to cope with the said challenges.
	
	This work formalizes a mesh-fog architecture as a natural instantiation of AAI where Fog Nodes (FNs) exhibit local autonomy, partial observability, p2p interaction, and extended operating time. Moreover, the proposed system provides a theoretical model of coordination, convergence, and stability of the system under partial observability. In other words, this paper presents a framework called Agentic Fog (AF) where FNs act as autonomous agents coordinating through shared memory and p2p localized interactions. In order to achieve decentralized optimization under bounded rationality, the proposed system decomposes the system level objective into policy guidance. In addition, the resulting coordination among the agents follows a Potential Game (PG) formulation ensuring formal convergence guarantees and stability under asynchronous updates and partial failures.
	
	\subsection{Contributions and Novelty of the Proposed Work}
	
	This work makes conceptual and technical contributions to distributed intelligence in fog computing. To clearly distinguish novelty from prior work in multi-agent systems, distributed control, and fog optimization, we summarize the primary contributions as follows:
	
	\begin{itemize}
		\item This study provides a precise, system-centric definition of AAI that is explicitly independent of LLMs. This interpretation emphasizes persistent autonomy, policy-driven decision-making, and analyzable coordination dynamics. This also addresses a growing conceptual ambiguity in the literature and repositions AAI as a viable paradigm for resource-constrained realistic infrastructures.
		
		\item This work models fog computing as a collection of autonomous and policy-driven agents. While agent-based fog control has been explored in the literature, this work provides a unified formal model that explicitly captures shared memory, decentralized coordination, and time-scale separation between slow policy alignment and fast local decision-making.
		
		\item We show that decentralized fog coordination induces an exact PG in which each agent optimizes its expected marginal contribution to a global objective. This formulation yields formal guarantees of convergence under asynchronous bounded-rational best-response dynamics.
		
		\item This study establishes that the proposed AF system remains stable under permanent node failures, provided connectivity and shared memory persistence are maintained. This result formalizes graceful degradation properties that are often observed empirically but rarely proven in practical frameworks.
		
	\end{itemize}
	
	Collectively, the proposed framework is a principled, analyzable, and infrastructure-compatible agentic architecture for fog computing. It complements—but is fundamentally different from both classical distributed control and LLM-centric agent systems.
	
	\subsection{Comparison with Existing Fog Control Paradigms}
	
	Classical ILP-based and RL-based systems, as well as Multi-Agent Systems (MASs), share certain surface-level similarities such as decentralized decision-making and agent-based abstractions with the proposed system. However, there is a fundamental difference between the proposed system and the traditional system in terms of observability, coordination, formal analyzability, and temporal optimization.
	
	ILP-based systems work on global system snapshots while assuming strong observability \cite{yu2025hybrid}. According to literature, these systems are not scalable and fragile under dynamic workloads \cite{lera2024multi}. Compared to ILP-based models, RL–based systems have proved to be adaptive; however, they do not provide formal convergence guarantees and may face stability issues under partial observability  \cite{liu2022partially}. 
	In contrast, the proposed AF is a formally analyzable agentic framework for fog environment that combines persistent autonomy, shared memory, and policy-driven coordination among AI agents.

	\section{Agentic AI Without Large Language Models}

	AAI refers to a computational paradigm in which autonomous entities (agents) maintain an internal state of the system by partially observing the system environment. 
	Formally, an agent \( \mathcal{A}_i \) is mathematically modeled as a tuple as shown in Eq. (\ref{eq:agent}):
	\begin{equation}
		\mathcal{A}_i = \langle \mathcal{O}_i, \zeta_i, \pi_i, \Lambda_i, \upsilon_i \rangle
		\label{eq:agent}
	\end{equation}
	where \( \mathcal{O}_i \) denotes an observation function mapping environmental signals to agent-perceived information, \( \zeta_i \) represents the agent’s internal state or memory, capturing historical context and local knowledge, \( \pi_i : (\mathcal{O}_i, \zeta_i) \rightarrow \Lambda_i \) is a decision policy that maps observations and internal state to actions, \( \Lambda_i \) is the agent’s action space, \( \upsilon_i \) is an optional reward or utility function guiding adaptation, learning, or policy evaluation.

	A defining characteristic of AAI is that the policy function \( \pi_i\) is algorithm-agnostic. It may be instantiated using optimization-based control, game-theoretic strategies, or learning-based methods. Crucially, this formulation does not require natural language reasoning or LLMs. It separates agentic intelligence from language-centric architectures and enables the formal analysis of system behavior.

	Recent literature often describes LLM-based systems as agentic architectures \cite{sapkota2025ai, kostopoulos2025agentic,  acharya2025agentic, brohi2025research}. In such systems, language models perform planning, task decomposition, or tool invocation via probabilistic text generation. Although these system have proven to be very good at general-purpose reasoning, they do not regard system constraints such as timing, resource, or stability. 
	In contrast, classical AAI emphasizes explicit policies, bounded decision spaces, and predictable execution semantics. These properties enable formal reasoning about convergence, stability, and performance guarantees. These requirements are particularly critical in latency-sensitive, resource-constrained environments, such as fog and edge computing infrastructures. As a result, agentic behavior should be understood as arising from sustained autonomous interaction and policy-driven decision-making, rather than from language-based reasoning alone.
	
	The foundation of AAI was laid several decades before the emergence of large language models. Foundational research in MAS, reactive and deliberative agent architectures, swarm intelligence, game theory, and RL established the core principles of autonomous perception, localized decision-making, and distributed coordination \cite{Wooldridge2009MAS,Shoham2009MAS, li2022applications, ma2024efficient}. These early agentic systems demonstrated that coherent global behavior can emerge from local interaction. Specifically, where agents interact under partial observability and limited communication. 
	
	In particular, work on distributed control and game-theoretic learning showed that stability and convergence properties can be achieved through structured local interactions and bounded-rational decision rules \cite{Fudenberg1998Learning,Monderer1996Potential}. RL further extended agentic characteristics by enabling agents to adapt their policies based on environmental feedback, albeit often with limited guarantees in non-stationary multi-agent settings \cite{Sutton2018RL}. 
	The literature on networking and distributed systems have widely used agent-based approaches to solve a variety of optimization problems. Common applications include adaptive routing, load balancing, cache placement, fault recovery, load and resource scheduling \cite{alsharif2025survey,li2022applications}. 
	
In fog and edge computing environments, agent-based systems typically operate under partial observability using localized measurements. Consequently, the underlying algorithmic policies are designed to satisfy strict performance constraints. These metrics include bounded latency, scalability, and failure resilience \cite{li2022applications, Mao2022JSAC}.

More recent literature increasingly aligns AAI with LLM-based systems. In such frameworks, LLMs perform reasoning, planning, and coordination through probabilistic text generation and tool invocation \cite{dwivedi2025agentic, sapkota2025ai}. These approaches emphasize expressive, general-purpose reasoning and have demonstrated impressive performance in open-ended tasks and human-in-the-loop workflows.

LLM-based agent architectures introduce non-determinism, limited interpretability, and substantial computational overhead. Such limitations make them difficult to formally analyze or deploy in resource-constrained infrastructures \cite{murugesan2025rise}. As a result, AAI has become conceptually overloaded, often confusing persistent autonomy and policy-driven decision-making with LLM-based reasoning. This overlooks the optimal alternative agentic systems that are based on traditional MASs and distributed control which are more in line with the operational requirements of fog and edge systems. 
This study provides a precise conceptual clarification of LLM-independent AAI, introducing a mesh-based fog architecture rooted in traditional agent autonomy.

\section{System Model}

To formalize the system, we model the fog infrastructure as an undirected graph $G$, mathematically represented as:
\begin{equation}
	G = (V,E)
	\label{eq:undirectedgraph}
\end{equation}
where each node $v_i \in V$ represents a fog node with bounded computation, storage, and communication capacity, and each edge $E_{ij} \in E$ denotes a bidirectional link between nodes $v_i$ and $v_j$. Each node $v_i$ hosts an autonomous fog agent $A_i$ defined as:
\begin{equation}
	A_i = (\zeta_{i}, \mathcal{M}_i, \pi_i)
	\label{eq:autonomousfogagent}
\end{equation}
where $\zeta_{i}$ is the locally observable system state, including queue lengths, cache occupancy, and link conditions, $\mathcal{M}_i$ is the agent’s finite local memory storing historical observations and outcomes, and $\pi_i : (\zeta_{i}, \mathcal{M}_i, \mathcal{M}^{*}) \rightarrow \Lambda_i$ is a policy mapping local state and shared context to actions.

Agents access a shared agentic memory $\mathcal{M}^{*}$, as defined in Eq. (\ref{eq:sharedmemory}). Shared memory is assumed to be eventually consistent and persists across agent failures.
\begin{equation}
	\mathcal{M}^{*} = \{\mathcal{T}, \mathcal{D}, \mathcal{H}\}
	\label{eq:sharedmemory}
\end{equation}
where $\mathcal{T}$ stores high-level topology information, $\mathcal{D}$ maintains demand history, and $\mathcal{H}$ records summarized historical policy outcomes.

The system seeks to minimize the global objective as shown in Eq. (\ref{eq:globalobjective}):

\begin{equation}
	\mathcal{O}^{*} = \alpha L + \beta C + \gamma R,
	\label{eq:globalobjective}
\end{equation}

where $L$, $C$ and $R$ denote end-to-end latency, aggregate resource consumption and reliability risk respectively. The weights $\alpha, \beta, \gamma \geq 0$ encode system-level priorities.

\begin{definition}[Agentic Fog System]
	A fog system is agentic if global behavior (intelligence) emerges from the coordinated decisions of multiple autonomous agents operating under partial observability and shared memory, without relying on centralized real-time optimization.
\end{definition}

\begin{figure*}[htp]
	\centering
	\includegraphics[width=0.8\linewidth]{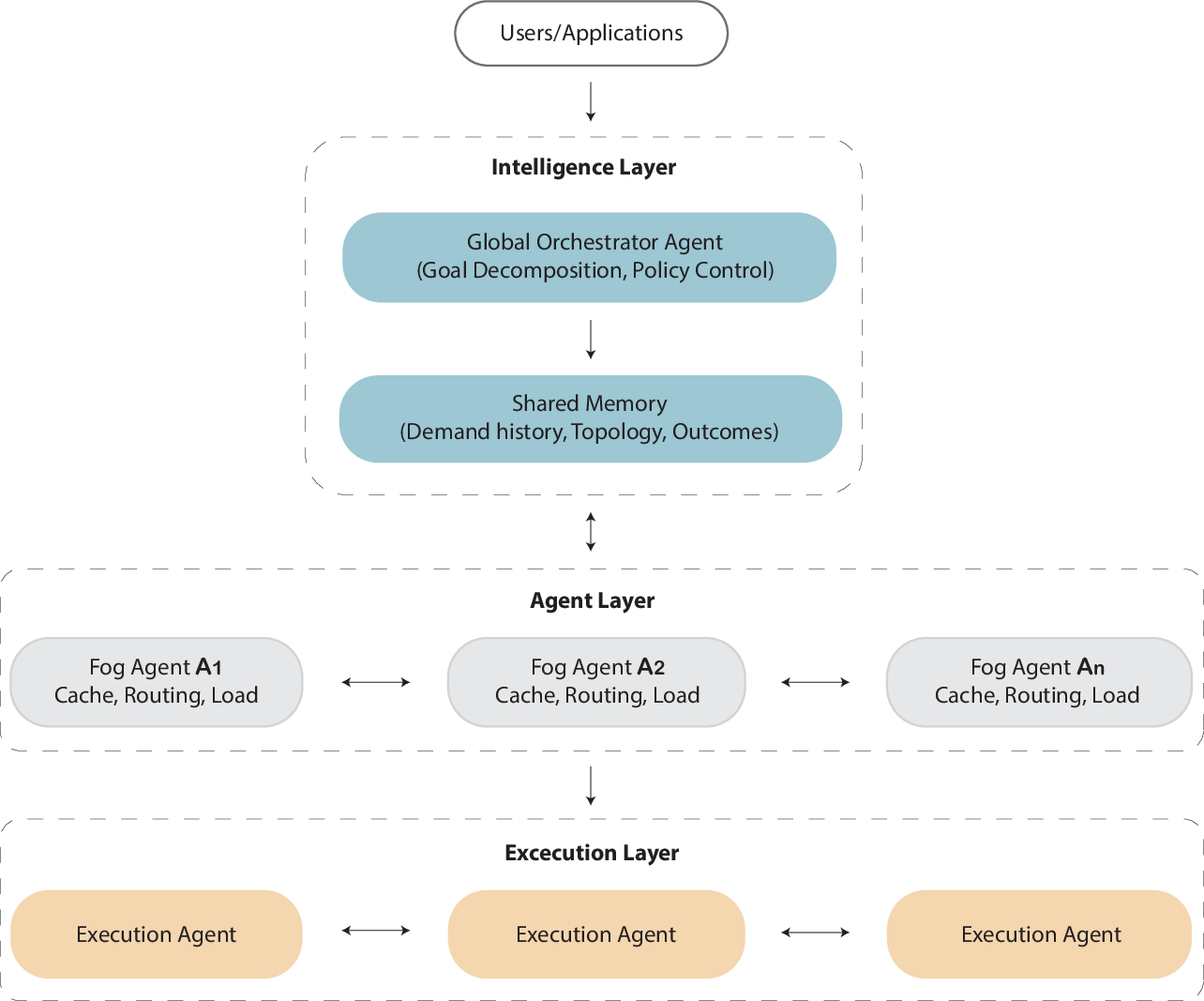}
	\caption{Layered architecture of the proposed Agentic Fog system comprising the Intelligence Layer, the Agent Layer, and the Execution Layer. The Intelligence layer has a Global Orchestrator Agent for policy control and goal decomposition, and a Shared Memory for saving previous system states, the Agent Layer consists of distributed Fog Agents managing caching, routing, and load balancing, and  finally, the Execution Layer comprises of Execution Agents responsible for localized task execution.}
	\label{fig:architecture}
\end{figure*}

Fig. \ref{fig:architecture} illustrates the proposed architecture designed to realize AAI at the system level. Intelligence emerges from coordinated interactions among autonomous FAs rather than from centralized optimization or monolithic controllers. Each architectural component is motivated by specific theoretical and practical considerations to ensure scalability, adaptability, and robustness in non-stationary, partially observable fog environments.

\subsection{Separation of System-Level Cognition and Local Autonomy}

The proposed architecture decouples system-level cognition from local decision-making. The Global Orchestrator Agent (GOA) is responsible for the decomposition and interpretation of strategic goals into abstract sub-objectives. Based on local observations, each FA constructs its own solution to attain system goals while avoiding scalability issues. In addition, the separation among different agents ensures global coherence through policy alignment. Theoretically, the division of global objective into locally optimizable sub-objectives for the PG formulation ensures convergence guarantees in our proposed work.

\subsection{Shared Memory}

The proposed architecture uses shared memory as a coordination tool among the participating agents in our proposed PG. It supports non-Markovian coordination and avoids centralized authority for decision-making. The participating agents share abstracted knowledge instead of raw state information. The agents can reason and find temporal patterns (demand trends and prior outcomes) from the abstracted knowledge while keeping the communication overhead as low as possible. Theoretically, the proposed system aligns local utilities with the global objective using the shared memory. This ensures that each agents contributes to the global optimization by unilateral improvements through local utility optimization; this ultimately leads the system to stable equilibrium states after finite number of interactions.

\subsection{Fog Agents}

FAs are designed as autonomous agents with partial observability, bounded rationality, and adaptive policies. The participating agents independently estimate their marginal contribution to the global objective using their local observations and past history from shared memory enabling quick and optimal response to local dynamics, such as workload changes or link failures. At the same time, p2p coordination among FAs enables cooperative behaviors, such as cache coordination and load redistribution, which are essential for ensuring system-level intelligence. In addition, the proposed system facilitates balance between agents autonomy and coordination to improve scalability and resilience of the underlying architecture.

\subsection{Peer-to-Peer Coordination for Emergent Intelligence}

The localized coordination without global synchronization between FAs is facilitated by direct p2p communication. This allows quick conflict resolution and consistency upholding in overlapping service areas and not subject to the overhead of centralized communication. Such interactions, by theory, provide the strategic coupling required to make the AF architecture modeled a PG, meaning that it converges to an equilibrium with asynchronous updates of best-responses.

\subsection{Decoupling Decision-Making from Execution}

The architecture separates decision-making and execution with the introduction of lightweight Execution Agents (EAs) that execute specific tasks of concrete operations. Where FAs are only interested in strategic thought and coordination, EAs carry out stateless services like content serving and replication and failover. Such isolation promotes modularity and enables the execution components to be scaled without any effect on the decision logic. It also makes sure that the failures on the execution-level do not impact the stability of the agentic coordination.

\subsection{Layered Structure for Scalability and Robustness}

A strategic decision to deal with the complexity of the proposed system and increase the robustness is the rational division into Agentic Intelligence, Fog Agent and Execution layers. All the layers are of different temporal and spatial scales, which provides local and global adaptability to the system. This makes our proposed system gracefully degradable in the event of partial failures in a heterogeneous Fog computing environment.

\subsection{Design Implications}

Collectively, the above design decisions make sure that the proposed fog computing system is a cognitive distributed environment instead of an optimized pipeline. It provides direct support to the theoretical results regarding the convergence, stability and adaptability. Intelligence is the result of interaction, memory and coordination. Thus, the architecture fills the gap between the theory of AAI and the practical fog systems.

Importantly, the GOA operates on a slower timescale and provides policy guidance rather than explicit control actions. This decoupling ensures that convergence and stability guarantees do not depend on the continuous availability of the GOA.

\begin{algorithm}[t]
	\caption{Global Objective Decomposition and Policy Alignment}
	\label{alg:orchestrator} \vspace{0.35em}
	\KwIn{Global objective $\mathcal{O}^{*}$, system constraints $\mathcal{C}$}
	\KwOut{Policy guidance $\Pi_G$ stored in shared memory}\vspace{0.75em}
	
	Monitor long-term demand statistics and aggregated system metrics\
	Decompose $\mathcal{O}^{*}$ into abstract sub-objectives $\{\mathcal{O}^{*}_k\}$ in terms of latency, cost, and reliability\;\vspace{0.35em}
	Obtain policy preferences and soft constraints $\Pi_G$ (e.g. weight bounds, priority signals)\;\vspace{0.35em}
	Publish $(\Pi_G, \mathcal{C})$ to $\mathcal{M}^{*}$\; \vspace{0.35em}
\end{algorithm}

\begin{algorithm}[t]
	\caption{Fog Agent Policy Update}
	\label{alg:fog_agent} \vspace{0.35em}
	\KwIn{$\zeta_{i}$, $\mathcal{M}_i$, $\mathcal{M}^{*}$}
	\KwOut{Updated $\pi_i$}\vspace{0.75em}
	
	Observe current $\zeta_{i}$\;\vspace{0.35em}
	Get $(\Pi_G, \mathcal{C})$ out of $\mathcal{M}^{*}$\;\vspace{0.35em}
	
	\ForEach{action $a \in \Lambda_i$}{
		Calculate the expected marginal contribution using:
		\[
		\upsilon_i(a) \triangleq \mathbb{E}[\Delta \mathcal{O}_i \mid \zeta_{i}, \mathcal{M}_i, \mathcal{M}^{*}, a]
		\]
	}\vspace{0.35em}
	
	Choose best response based on:
	\[
	a_i^\star \gets \arg\min_{a \in \Lambda_i} \upsilon_i(a)
	\]
	
	Delegate $a_i^\star$ to EA\;\vspace{0.35em}
	Observe result of execution and local effect\;\vspace{0.35em}
	Update $\mathcal{M}_i$\;\vspace{0.35em}
	Store outcomes summary to $\mathcal{M}^{*}$\;\vspace{0.35em}
\end{algorithm}

Algorithm \ref{alg:orchestrator} captures the system-level intelligence in the proposed AF system in which the GOA is intentionally excluded from real-time control. Its primary function is to translate the global objective $\mathcal{O}^{*}$ into abstract policy guidance. The GOA stores the alignment signals in shared memory to ensure consistency across FAs without sacrificing decentralization. This design promotes scalability and provides the structural basis necessary for modeling the system as a PG.

\begin{algorithm}[t]
	\caption{Peer-to-Peer Fog Agent Coordination}
	\label{alg:coordination} \vspace{0.35em}
	\KwIn{Neighbor set $\mathcal{N}_i$, coordination interval $\tau$}
	\KwOut{Locally consistent actions} \vspace{0.35em}
	
	\ForEach{neighbor $j \in \mathcal{N}_i$}{
		Exchange summarized state and intent information $(\hat{S}_i, \hat{M}_i)$\;
	}\vspace{0.35em}
	
	Detect coordination conflicts (e.g., overload, redundant replication, contention)\;\vspace{0.35em}
	\eIf{conflict detected}{
		Negotiate local action adjustments minimizing joint marginal cost $\Delta \mathcal{O}_{i,j}$\;
		Apply agreed-upon action updates\;
	}{
		Maintain current policy\;
	}
	
	Wait for $\tau$ before next coordination round\;\vspace{0.35em}
\end{algorithm}

\begin{algorithm}[t]
	\caption{Execution Agent Task Handling}
	\label{alg:execution} \vspace{0.35em}
	\KwIn{Task specification from fog agent}
	\KwOut{Execution status} \vspace{0.35em}
	
	Receive task request\;\vspace{0.35em}
	Execute specified operation (serve, replicate, migrate)\;\vspace{0.35em}
	Report status (success, failure) to requesting fog agent\;\vspace{0.35em}
\end{algorithm}

The essence of agentic behavior of FNs is formalized in Algorithm \ref{alg:fog_agent}. Each FA operates under partial observability independently by combining local state and shared historical context. The algorithm explicitly considers the marginal contribution of candidate actions to the global objective which results in local decisions aligned with the global objective. The bounded-rational best-response update can make sure that the global potential function is monotonically reduced. Finally, publishing summarized outcomes to shared memory allows for the collective learning and provides for the convergence under asynchronous updates.

Algorithm \ref{alg:coordination} allows for a quick local stabilization by direct p2p coordination of FAs. By exchanging summaries of their high-level state, agents settle conflicts such as redundant replication or contention due to independent decision-making without global synchronisation. This way oscillatory behaviour is avoided and consistency is enforced. The structured interactions defined in this protocol are important for ensuring the existence of a global potential function and ensuring stability.

Algorithm \ref{alg:execution} outlines the behaviour of stateless and non-strategic EAs in charge of executing allotted tasks as well as reporting results. The EA receives a task, executes it, and reports the outcomes of the execution. This imposed decoupling allows for horizontal scalability, and the resulting stability of agentic coordination from failure at the execution level.\\

\begin{proposition}[Dynamic Goal Decomposition]
	The global objective $\mathcal{O}^{*}$ can be dynamically decomposed into sub-objectives $\{o_1,\dots,o_k\}$ such that each sub-goal corresponds to a locally creditable marginal contribution and is solvable under partial observability.
\end{proposition}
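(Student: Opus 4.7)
The plan is to construct the decomposition constructively by exploiting the additive structure already built into $\mathcal{O}^{*} = \alpha L + \beta C + \gamma R$. First I would observe that each of $L$, $C$, and $R$ is a sum over atomic system events (per-request latency contributions, per-node resource usage, per-link reliability risk), and that every such atomic event is causally triggered by the action of at most one agent $A_i$ together with the state of its local neighborhood on the graph $G$. This lets me write $\mathcal{O}^{*} = \sum_{i} \phi_i(a_i, a_{-i})$, where $\phi_i$ collects all atomic cost terms attributable to events routed through, cached at, or executed by node $v_i$. The candidate sub-objective is then defined as the expected marginal contribution $o_i(a) \triangleq \mathbb{E}\!\left[\mathcal{O}^{*}(a, a_{-i}) - \mathcal{O}^{*}(a_0, a_{-i}) \,\middle|\, \zeta_i, \mathcal{M}_i, \mathcal{M}^{*}\right]$, which matches the quantity $\upsilon_i(a)$ already used in Algorithm \ref{alg:fog_agent} and directly satisfies the ``locally creditable'' requirement of the statement.

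Second, I would argue solvability under partial observability. The agent $A_i$ observes $\zeta_i$ directly and has access to $\mathcal{T}$, $\mathcal{D}$, and $\mathcal{H}$ through $\mathcal{M}^{*}$. Because $\phi_i$ depends on the configuration of $A_i$ and its one-hop (or, more generally, bounded-radius) neighbors through $G$, the quantities needed to evaluate $o_i(a)$ either lie in $\zeta_i$, are summarized in $\mathcal{M}^{*}$, or can be exchanged through the peer-to-peer step of Algorithm \ref{alg:coordination}. The expectation absorbs the residual unobserved state, and the eventual consistency of $\mathcal{M}^{*}$ ensures that the estimator is well-defined. Credit assignment correctness then reduces to the Wonderful-Life / marginal-contribution style identity $\sum_i o_i = \mathcal{O}^{*} - \mathcal{O}^{*}(a_0)$ up to an action-independent baseline, which is exactly what the PG formulation relied on later in the paper requires.

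Third, I would address the ``dynamic'' aspect: the weights $\alpha, \beta, \gamma$ and the demand profile inside $\mathcal{D}$ drift over time, so the decomposition must be recomputed. Because $o_i$ is defined as a conditional expectation against the current contents of $\mathcal{M}^{*}$, any refresh of $\Pi_G$ by the GOA (Algorithm \ref{alg:orchestrator}) automatically propagates into a re-parameterization of each $o_i$ without requiring global recomputation. This preserves the decomposition under the slow-timescale updates of the orchestrator while keeping the fast-timescale local optimization untouched.

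The main obstacle I expect is the cross-coupling issue: when the action of $A_i$ reshapes traffic or replication patterns, it perturbs the costs attributed to neighboring agents, so the naive per-node partition $\phi_i$ is not uniquely defined. I would handle this by adopting the marginal-contribution (rather than strict partition) convention, which is invariant to how overlapping costs are split, and by formalizing a bounded-interaction assumption stating that an agent's action affects only atomic events within a finite neighborhood on $G$. Under that assumption the estimator for $o_i$ has bounded variance using only $\mathcal{M}^{*}$ and neighbor summaries, which closes the argument and simultaneously sets up the exact-potential property invoked by the subsequent convergence result.
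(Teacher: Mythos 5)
Your proposal is correct and rests on the same basic observation as the paper's proof: the global objective $\mathcal{O}^{*}=\alpha L+\beta C+\gamma R$ inherits an additive, per-node structure ($L=\sum_i L_i$, $C=\sum_i C_i$, $R=\sum_i R_i$), and each term depends only on a node's own decisions and its neighborhood, so each agent can be credited with and optimize a local sub-goal. Where you diverge is in how the credit is assigned: the paper treats the strict per-node partition itself as the decomposition and stops there, whereas you deliberately abandon the naive partition (noting it is not uniquely defined under cross-coupling) and instead define each sub-objective as an expected marginal contribution, i.e.\ the quantity $\upsilon_i(a)=\mathbb{E}[\Delta\mathcal{O}_i\mid\zeta_i,\mathcal{M}_i,\mathcal{M}^{*},a]$ that Algorithm~\ref{alg:fog_agent} actually computes, supported by an explicit bounded-interaction assumption on $G$. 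You also argue solvability under partial observability concretely (via $\zeta_i$, the shared memory components $\mathcal{T},\mathcal{D},\mathcal{H}$, and the peer exchange of Algorithm~\ref{alg:coordination}) and handle the ``dynamic'' qualifier through GOA re-parameterization of $\Pi_G$, neither of which the paper's three-sentence proof addresses. What each approach buys: the paper's version is shorter but silently assumes the per-node split is well-defined and locally estimable; your version is slightly heavier but makes those assumptions explicit, is invariant to how overlapping costs are split, and lines up directly with the exact-potential construction ($\Phi=\mathcal{O}^{*}$) used in the subsequent convergence theorem, so it arguably proves the statement the later results actually need.
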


\begin{proof}
	Latency, cost, and risk decompose additively across nodes and paths: $L=\sum_i L_i$, $C=\sum_i C_i$, and $R=\sum_i R_i$. Each component depends only on local decisions and neighboring interactions. Thus, each agent can optimize a local sub-goal aligned with its marginal impact on $\mathcal{O}^{*}$.
\end{proof}
\begin{lemma}[Local Rationality]
	Given $(\zeta_{i}, \mathcal{M}_i, \mathcal{M}^{*})$, agent $\mathcal{A}_i$ can construct a consistent estimator of its marginal contribution to $\mathcal{O}^{*}$.
\end{lemma}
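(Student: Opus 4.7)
The plan is to construct the estimator explicitly by exploiting the additive decomposition established in the preceding proposition and then invoke a law-of-large-numbers (equivalently, an ergodic) argument to obtain consistency. First I would write the marginal contribution of agent $\mathcal{A}_i$ under a candidate action $a$ as
\begin{equation*}
\Delta \mathcal{O}_i(a) \;=\; \alpha\,\Delta L_i(a) \;+\; \beta\,\Delta C_i(a) \;+\; \gamma\,\Delta R_i(a),
\end{equation*}
so that the target quantity is a weighted sum of three node-level terms rather than a global functional. This reduction is the payoff from Proposition~1: each $\Delta X_i$ depends only on the decisions of $\mathcal{A}_i$ and interactions with its immediate neighbors.

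Second, I would verify that each of the three node-level terms is measurable with respect to $(\zeta_i, \mathcal{M}_i, \mathcal{M}^{*})$. The latency contribution $\Delta L_i$ is a function of queue lengths and link conditions, all contained in $\zeta_i$; the resource contribution $\Delta C_i$ is a function of local occupancy and the topology component $\mathcal{T}\subset\mathcal{M}^{*}$; the reliability contribution $\Delta R_i$ is a function of failure statistics and demand patterns aggregated in $\mathcal{H}\cup\mathcal{D}\subset\mathcal{M}^{*}$. Hence each summand can in principle be evaluated from the information available to the agent.

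Third, I would form the plug-in estimator $\widehat{\Delta\mathcal{O}}_i(a)$ by replacing each $\Delta X_i(a)$ with its empirical average over past realizations recorded in $\mathcal{M}_i$ and in the summarized history $\mathcal{H}$. Consistency then follows from a standard argument: under the stationarity/ergodicity of the slow-timescale demand process (which is the regime in which the GOA's policy guidance $\Pi_G$ is held fixed), the per-component sample means converge in probability to their expectations, and linearity of expectation lifts this to $\widehat{\Delta\mathcal{O}}_i(a) \xrightarrow{p} \mathbb{E}[\Delta\mathcal{O}_i(a) \mid \zeta_i,\mathcal{M}_i,\mathcal{M}^{*}]$. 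A union bound over the finite action set $\Lambda_i$ gives uniform consistency, which is what the best-response step in Algorithm~\ref{alg:fog_agent} actually requires.

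The main obstacle will be handling the non-stationarity and eventual consistency of $\mathcal{M}^{*}$: shared-memory entries may be stale, and neighbors' concurrent actions introduce a correlated, non-Markovian component that in principle violates the i.i.d.\ assumptions behind a naive LLN. I would address this in two steps. First, separate timescales: because $\Pi_G$ and $\mathcal{M}^{*}$ evolve slowly relative to local updates, conditioning on their current value yields a locally stationary regime in which mixing conditions (e.g., geometric ergodicity of the demand process) can be assumed. Second, because bounded rationality only requires the estimator to rank actions correctly up to an $\varepsilon$-gap, I would argue that an $\varepsilon$-consistent estimator is sufficient to drive the best-response dynamics used in the subsequent potential-game analysis, so asymptotic exactness is not strictly needed for the convergence results that follow.
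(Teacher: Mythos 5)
Your proposal is correct and follows essentially the same route as the paper's proof, which likewise argues that the empirical action--outcome correlations stored in shared memory, combined with bounded non-stationarity and finite-memory assumptions, make conditioning on $(\zeta_i, \mathcal{M}_i, \mathcal{M}^{*})$ yield a consistent estimator of the expected marginal effect on $\mathcal{O}^{*}$. Your version is simply a more explicit elaboration of that sketch---the additive decomposition from Proposition~1, the plug-in empirical estimator, the timescale-separation justification for local stationarity, and the observation that $\varepsilon$-consistency suffices for the bounded-rational best response---all of which are consistent with, and strengthen, the paper's argument.
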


\begin{proof}
	Shared memory provides empirical correlations between past actions and observed system outcomes. Under bounded non-stationarity and finite-memory assumptions, conditioning on $(\zeta_{i}, \mathcal{M}_i, \mathcal{M}^{*})$ yields a consistent estimator of the expected marginal effect of agent $\mathcal{A}_i$’s actions on $\mathcal{O}^{*}$.
\end{proof}

\begin{assumption}
	Agents update policies using asynchronous, bounded-rational best-response dynamics with finite memory.
\end{assumption}

\begin{definition}[AF Game]
	The AF system induces a strategic game where each agent $\mathcal{A}_i$ minimizes a local utility representing its expected marginal contribution to $\mathcal{O}^{*}$. Eq. (\ref{localutility}) depicts the local utility of an Agent $i$:
	\begin{equation}
		\upsilon_i = \mathbb{E}[\Delta \mathcal{O}_i],
		\label{localutility}
	\end{equation}
	
\end{definition}
\begin{theorem}[Existence of Potential Function]
	The induced AF game admits an exact potential function.
\end{theorem}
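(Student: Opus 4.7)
The plan is to take $\Phi(a) \triangleq \mathbb{E}[\mathcal{O}^{*}(a)]$ as the candidate exact potential, where $a = (a_1,\ldots,a_N)$ denotes the joint action profile of the fog agents. The intuition is that because each local utility $\upsilon_i$ is already defined as the \emph{expected marginal contribution} of agent $\mathcal{A}_i$ to the system-wide objective, the game inherits the ``wonderful-life'' / marginal-contribution structure that is known to induce an exact potential in the sense of Monderer and Shapley.

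First, I would invoke Proposition 1 to write the global cost additively as $\mathcal{O}^{*}(a) = \alpha \sum_{j} L_j(a) + \beta \sum_{j} C_j(a) + \gamma \sum_{j} R_j(a)$, where each per-agent term depends only on the action of agent $j$ and its immediate neighbors $\mathcal{N}_j$. Linearity of expectation preserves this additive form for $\Phi$. Next, I would formalise $\Delta \mathcal{O}_i$ explicitly as $\Delta \mathcal{O}_i(a) = \mathcal{O}^{*}(a_i, a_{-i}) - \mathcal{O}^{*}(a_i^{0}, a_{-i})$, where $a_i^{0}$ is a fixed reference action (e.g.\ ``no deviation from current policy''). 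The defining identity of an exact potential then reduces to verifying
\begin{equation}
\upsilon_i(a_i', a_{-i}) - \upsilon_i(a_i, a_{-i}) = \Phi(a_i', a_{-i}) - \Phi(a_i, a_{-i})
\end{equation}
for every unilateral deviation $a_i \to a_i'$. This follows immediately because the baseline term $\mathbb{E}[\mathcal{O}^{*}(a_i^{0}, a_{-i})]$ is independent of $a_i$ and cancels on both sides.

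The main obstacle will be handling the coupling induced by neighboring interactions: if $L_j$, $C_j$, or $R_j$ for some $j\neq i$ depends on agent $i$'s action (e.g.\ through shared routing paths or replicated caches), the marginal-contribution computation must aggregate all cross-node effects in the neighborhood of $i$. I would address this by taking the relevant sum over the index set $\{i\}\cup\mathcal{N}_i$ rather than $\{i\}$ alone, and by observing that Algorithm 3 supplies exactly this information to agent $i$ via p2p summarized state exchange. The telescoping cancellation in the identity above still goes through because only terms indexed by $\{i\}\cup\mathcal{N}_i$ vary with $a_i$, while contributions of agents outside this neighborhood are absorbed into the baseline and drop out of the difference.

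A secondary concern is measurability and well-definedness: treating $\Phi = \mathbb{E}[\mathcal{O}^{*}]$ as a static potential requires that the expectation be taken over stochastic demand and channel realizations that are conditionally independent of the chosen action profile. I would justify this using the bounded non-stationarity and finite-memory conditions already invoked in Lemma 1, which guarantee that $\Phi(a)$ is a deterministic function of the joint profile and therefore qualifies as a legitimate potential. With these ingredients in place, the verification of the potential identity is essentially bookkeeping over the additive decomposition, and existence follows.
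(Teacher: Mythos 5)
Your proposal is correct and follows essentially the same route as the paper: the paper also takes the global objective itself ($\Phi = \mathcal{O}^{*}$) as the exact potential and appeals to the marginal-contribution structure of $\upsilon_i$, which is precisely the baseline-cancellation ("wonderful-life") argument you spell out. Your version merely makes explicit what the paper leaves as "by construction" — the fixed reference action, the cancellation of the baseline term, and the expectation — so it is a more detailed rendering of the same proof rather than a different one.
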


\begin{proof}
	Define the potential function as $\Phi = \mathcal{O}^{*}$. By construction, unilateral action changes by any agent $A_i$ result in equivalent changes to both $\upsilon_i$ and $\Phi$, satisfying the definition of an exact PG.
\end{proof}
\begin{theorem}[Convergence]
	Under asynchronous best-response dynamics, the AF system converges to a Nash equilibrium corresponding to a locally optimal (stable) configuration.
\end{theorem}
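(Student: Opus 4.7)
The plan is to leverage the exact potential function $\Phi = \mathcal{O}^{*}$ established in the previous theorem and reduce the convergence question to a monotone-improvement argument on a finite, bounded-below sequence. First I would write down the potential identity explicitly: whenever an activated agent $\mathcal{A}_i$ switches from $a_i$ to $a_i^{\star} \in \arg\min_{a \in \Lambda_i} \upsilon_i(a)$ with $a_{-i}$ fixed, exactness of the potential gives $\Phi(a_i^{\star}, a_{-i}) - \Phi(a_i, a_{-i}) = \upsilon_i(a_i^{\star}) - \upsilon_i(a_i) \leq 0$, with strict inequality unless $a_i$ is already a best response. Thus every best-response update is a weak descent step on $\Phi$.

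Second, I would establish termination in the synchronous-exact idealization. Since $L, C, R$ are non-negative physical quantities and $\alpha, \beta, \gamma \geq 0$, we have $\Phi \geq 0$. Combined with the finiteness of $V$ and of each $\Lambda_i$, the joint action space is finite and $\Phi$ takes finitely many values in a bounded-below set; a monotone non-increasing sequence on such a set must stabilize after finitely many strict-improvement steps. At that stabilization point no activation can strictly improve the activating agent's local utility, so for every $i$ and every $a \in \Lambda_i$ we have $\upsilon_i(a_i) \leq \upsilon_i(a)$, which is exactly the Nash equilibrium condition. Because $\Phi$ also identifies this configuration with a local minimum of the global objective, the equilibrium is simultaneously locally optimal and stable against unilateral deviations.

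Third, I would handle the asynchrony from Assumption 1 by importing the standard fairness condition that every agent is activated infinitely often (implicit in the asynchronous dynamics of Algorithm \ref{alg:fog_agent} and the periodic coordination of Algorithm \ref{alg:coordination}). Under fairness, if the configuration were not a Nash equilibrium, some agent with a strictly improving move would eventually be scheduled and would strictly decrease $\Phi$; since only finitely many strict decreases are possible, the system must reach a state in which no agent has a strictly improving move, i.e.\ a Nash equilibrium. The eventual consistency of $\mathcal{M}^{*}$ only delays information propagation and so does not affect the finite-step descent structure.

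The hard part, and the step I expect to be the main obstacle, is reconciling the \emph{bounded-rational} nature of the updates with the \emph{exact} potential descent the argument relies on. Agents optimize an \emph{estimated} marginal $\hat{\upsilon}_i$ built from $(\zeta_i, \mathcal{M}_i, \mathcal{M}^{*})$ rather than the true $\upsilon_i$, so descent only holds up to an estimation slack $\epsilon_t$. I would address this by appealing to the Local Rationality lemma: under bounded non-stationarity and finite memory, the estimator is consistent, so $\epsilon_t \to 0$ as shared memory accumulates. This gives $\epsilon$-Nash convergence in the transient regime and recovers exact Nash equilibrium in the limit, preserving the overall monotone-descent conclusion while formally accommodating the stochastic and asynchronous character of the dynamics.
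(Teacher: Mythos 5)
Your proposal is correct and follows essentially the same route as the paper, whose entire proof is a two-line invocation of the classical fact that exact potential games converge under asynchronous best-response dynamics with the potential $\Phi = \mathcal{O}^{*}$ from the preceding theorem, bounded rationality giving a local minimum of $\Phi$. You simply supply the details the paper leaves implicit — the finite-improvement descent argument on a finite, bounded-below action space, the fairness condition needed for asynchrony, and the reconciliation of estimated (bounded-rational) utilities with exact potential descent via the Local Rationality lemma, this last point being treated more carefully in your write-up than in the paper's one-sentence dismissal.
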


\begin{proof}
	Exact PGs converge under asynchronous best-response updates. Bounded rationality ensures convergence to a local minimum of $\Phi$.
\end{proof}

\begin{theorem}[Stability Under Failures]
	If a subset of agents fails permanently, the system remains stable provided shared memory persists, and the communication graph remains connected.
\end{theorem}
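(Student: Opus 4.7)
The plan is to reduce the post-failure system to a smaller instance of the same potential game and then invoke Theorem~2. Fix a set of permanently failed nodes $F \subset V$, and let $V' = V \setminus F$ denote the surviving agents on the induced subgraph $G' = (V', E')$. By hypothesis $G'$ is connected and $\mathcal{M}^{*}$ persists, so every surviving $A_i$ can still read $(\mathcal{T}, \mathcal{D}, \mathcal{H})$ and exchange summaries with its neighbors in $G'$. The only effect of the failure on the system description is to shrink the set of strategic players and to freeze the action variables associated with $F$.

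Next, I would construct a restricted potential. By the additive decomposition used in Proposition~1, write $\mathcal{O}^{*} = \sum_{i \in V'}(\alpha L_i + \beta C_i + \gamma R_i) + \sum_{j \in F}(\alpha L_j + \beta C_j + \gamma R_j)$. After permanent failure, the second sum is either identically zero (no traffic, replicas, or risk attributable to $j \in F$) or a constant that does not depend on any surviving agent's action. Defining $\Phi' \triangleq \sum_{i \in V'}(\alpha L_i + \beta C_i + \gamma R_i)$, a unilateral deviation by any surviving $A_i$ changes $\upsilon_i$ and $\Phi'$ by the same amount, exactly as in the proof of Theorem~1. Hence the reduced game on $V'$ is still an exact potential game, and Theorem~2 applies verbatim: asynchronous bounded-rational best-response dynamics on $V'$ monotonically decrease $\Phi'$ and terminate at a Nash equilibrium of the surviving subsystem. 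Since $\Phi'$ is bounded below (latency, cost, and risk are nonnegative), the reduced system is Lyapunov-stable at this equilibrium.

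The main obstacle will be justifying that the marginal-contribution estimator from Lemma~1 remains consistent when the history stored in $\mathcal{M}^{*}$ was partly generated under the pre-failure topology. I would handle this by appealing to the bounded non-stationarity and finite-memory assumption already in force: after a finite transient window, the finite-memory buffer re-accumulates outcomes drawn from the post-failure regime, which restores consistency of the estimator of $\mathbb{E}[\Delta \mathcal{O}_i \mid \zeta_i, \mathcal{M}_i, \mathcal{M}^{*}, a]$ and therefore of the best-response updates used in Algorithm~\ref{alg:fog_agent}. During this transient, the monotone decrease of $\Phi'$ can be slowed but not reversed, because best-response updates never strictly increase the potential.

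A final subtlety is coupling terms in $\mathcal{O}^{*}$ that linked a surviving agent $i \in V'$ to a failed neighbor $j \in F$ (e.g., a route or replica dependency on an edge $E_{ij}$). Connectivity of $G'$ ensures that for every such coupling an alternative peer exists in $\mathcal{N}_i \cap V'$, so the corresponding $L_i$, $C_i$, and $R_i$ remain finite after the coordination round of Algorithm~\ref{alg:coordination} reroutes the dependency. Combining the reduced potential argument, the persistence of shared memory, and graceful rerouting via connectivity yields stability of the surviving AF system, which is exactly the graceful-degradation property claimed.
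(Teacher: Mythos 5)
Your proposal is correct and follows essentially the same route as the paper: treat the post-failure system as the same exact potential game restricted to the surviving agents (the failed agents' terms in $\mathcal{O}^{*}$ being constant, your $\Phi'$ and the paper's unchanged $\Phi$ differ only by a constant), then invoke the asynchronous best-response convergence result to obtain a new equilibrium, with shared-memory persistence and connectivity guaranteeing the coordination machinery still operates. Your added detail on estimator consistency after the topology change and on rerouting couplings to failed neighbors elaborates points the paper leaves implicit, but does not change the underlying argument.
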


\begin{proof}
	Agent failures reduce the feasible action space but do not alter the potential function $\Phi$. Remaining agents re-optimize their local utilities to yield a new equilibrium with bounded degradation in $\mathcal{O}^{*}$.
\end{proof}

\begin{table}[t]
	\centering
	\small
	\caption{Simulation Parameters}
	\label{tab:sim_config}
	\begin{tabular}{ll}
		\hline
		Parameter & Value \\
		\hline
		No. of Fog Nodes & 10--50 \\
		Topology & Bounded-degree random mesh \\
		Request Arrival & Non-stationary Poisson \\
		Content Popularity & Zipf ($s > 1$) \\
		Shared Memory Size & 20--100 episodes \\
		Coordination Interval & 5--25 time units \\
		Simulation Runs & $N \geq 10$ per setting \\
		\hline
	\end{tabular}
\end{table}

\section{Simulation Results and Discussion}

To validate the theoretical claims, we evaluate the proposed AF framework using extensive simulation that model dynamic, partially observable fog environments. 
FNs are deployed over a mesh network represented as an undirected graph with 10 to 50 nodes. Node degrees follow a bounded random graph with maximum degree constraints to emulate realistic fog connectivity and prevent hub dominance. 
User requests arrive according to a non-stationary Poisson process with a time-varying rate $\lambda(t)$, capturing low, medium, and high demand patterns. Content popularity follows a Zipf distribution with a skew parameter $s > 1$, inducing spatial and temporal demand imbalance across FNs. Table~\ref{tab:sim_config} summarizes the key simulation parameters used across all experiments. Results are averaged over multiple independent runs to ensure statistical significance.

Each FA updates its policy using asynchronous bounded-rational best-response dynamics conditioned on $(\zeta_{i}, \mathcal{M}_i, \mathcal{M}^{*})$. For each candidate action, agents estimate the expected marginal contribution using Eq. (\ref{agentcontribution}):
\begin{equation}
	\upsilon_i(a) = \mathbb{E}[\Delta \mathcal{O}_i \mid \zeta_{i}, \mathcal{M}_i, \mathcal{M}^{*}, a],
	\label{agentcontribution}
\end{equation}
and select actions minimizing $\upsilon_i(a)$. Policies are initialized randomly and updated without global synchronization.

We compare the proposed AF framework against the following baselines:
\\\textbf{Centralized ILP}: Global optimization solved periodically using complete system state snapshots.
\\\textbf{Greedy Heuristic}: Local latency minimization without coordination or shared memory.

Furthermore, we evaluate the performance of the proposed system against baselines using the following metrics:
\\\textbf{Average Latency:} Mean end-to-end request response time.
\\\textbf{Adaptation Time:} Time required to stabilize after workload shifts.
\\\textbf{Failure Resilience:} Relative latency degradation under random node failures.
\\\textbf{Control Overhead:} Coordination and optimization of message cost.

\subsection{Latency and Adaptivity}
Fig. \ref{fig:fig1} shows a comparative analysis of average latency under dynamic workloads for proposed AF framework, Greedy heuristic and ILP based control scheme. Proposed AF system has an improved latency reduction of 15-30\% and 10-18\% as compared to Greedy strategy and ILP approach respectively. These results confirm the efficacy of the underlying PG formulation which makes it possible to optimize trajectory (trajectory-level optimization and adaptability in highly-dynamic environments).

\begin{figure}[t]
	\centering
	\includegraphics[width=\linewidth]{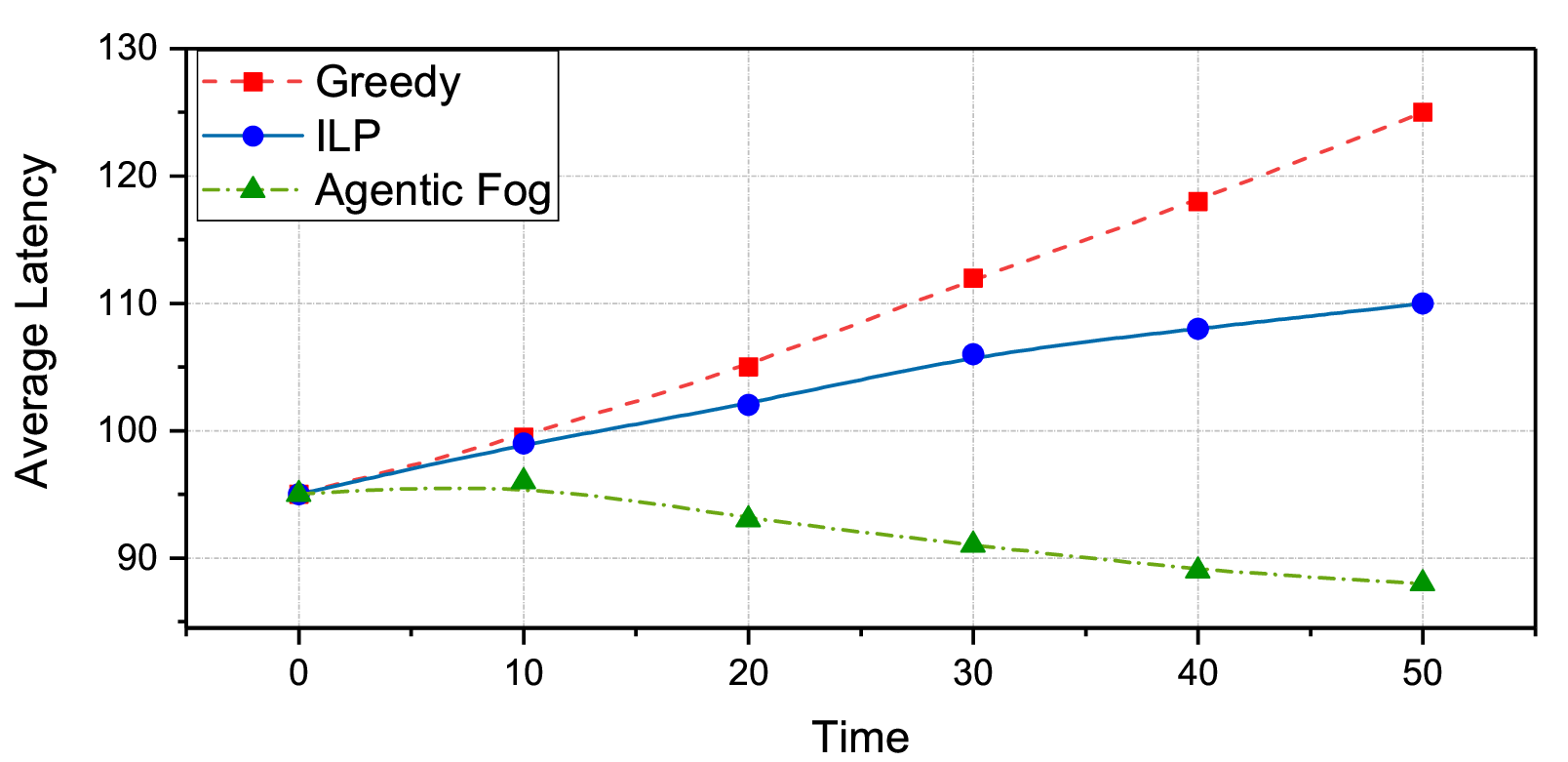}
	\caption{Comparing the latency under varying workload conditions. The proposed AF records lower average latency as compared to the baseline strategies over time.}
	\label{fig:fig1}
\end{figure}

\subsection{Convergence Behavior}
Fig. \ref{fig:fig2} compares the convergence behavior of the proposed AF system with ILP based and Greedy approaches. The AF system converges significantly faster in fewer iterations for various sizes of network, proving the existence of scalability and efficiency. Notably, the number of iterations is almost constant in line with the convergence guarantees of exact policy gradient (PG) methods under asynchronous best response dynamics. This bound convergence points to the system's suitability for large-scale, dynamic fog environments.

\begin{figure}[ht]
	\centering
	\includegraphics[width=\linewidth]{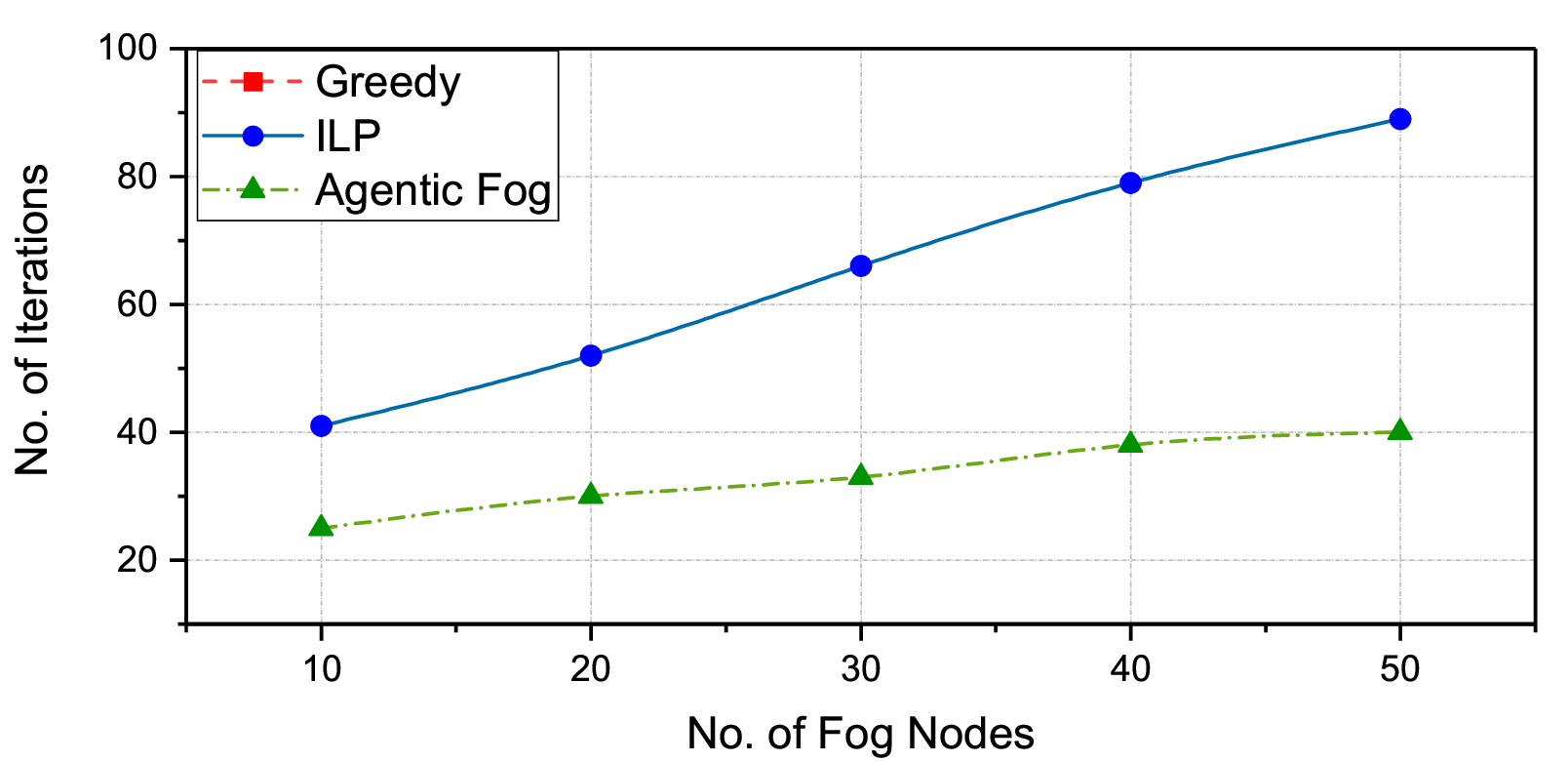}
	\caption{Comparing the convergence behavior as a function of network size. The proposed AF system results in faster convergence with increasing number of fog nodes, compared with the ILP and Greedy baselines.}
	\label{fig:fig2}
\end{figure}

\subsection{Failure Resilience}

The proposed AF system shows better stability in random fog node failure compared to ILP-based and Greedy control strategies. While the baseline methods incur high performance degradation with this being reflected in a steep increase in latency, the AF framework retains robust performance with marginal latency growth as the failure rate increases. The observed resilience could be explained by the adaptive policy mechanisms and decentralised control that the AF architecture provides and which enable the dynamic reconfiguration in response to unavailability of node, as shown in Fig. \ref{fig:fig3}.

\begin{figure}[t]
	\centering
	\includegraphics[width=\linewidth]{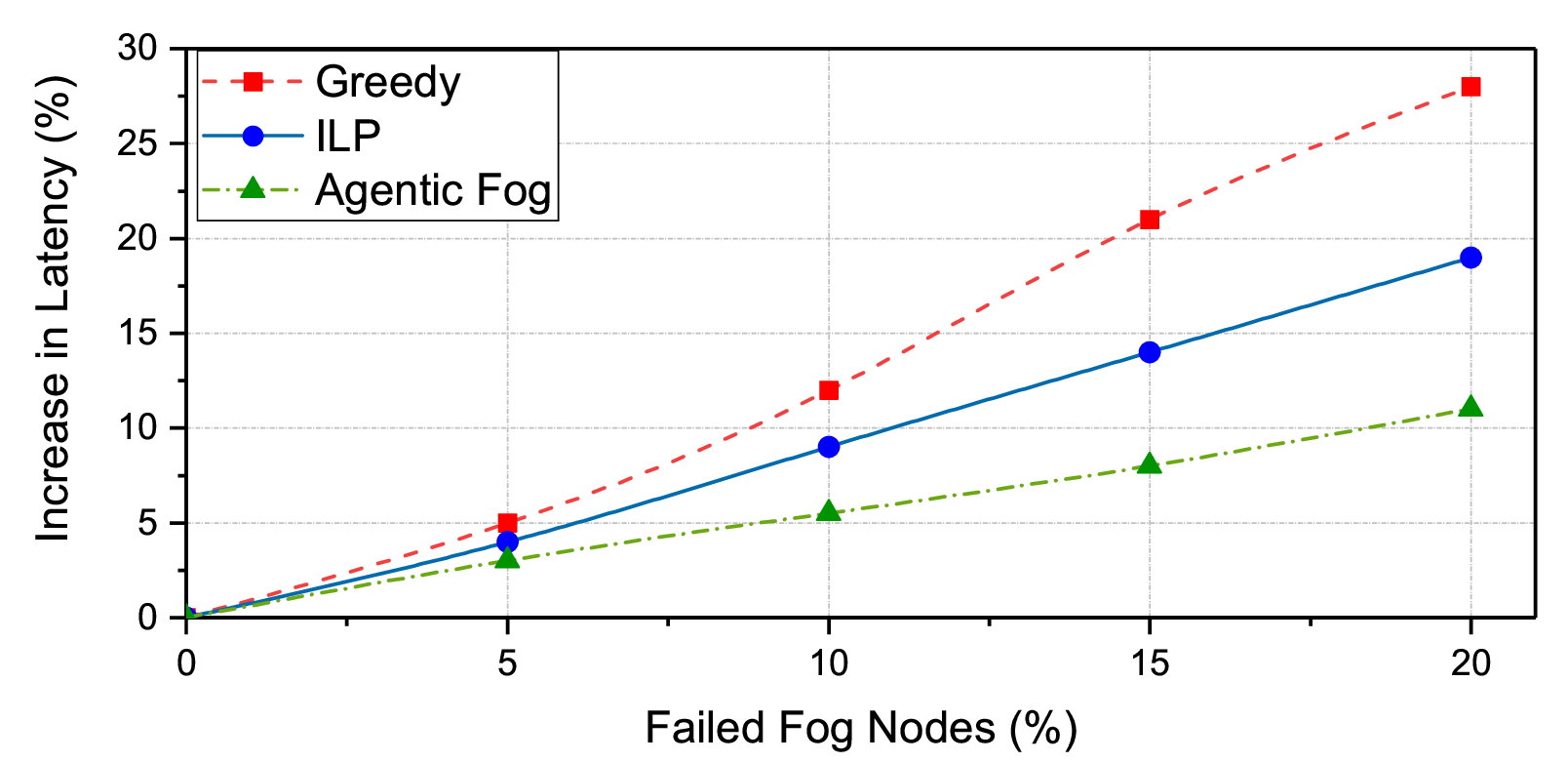}
	\caption{Performance degradation under node failures. The proposed Agentic Fog framework exhibits greater resilience to random fog node failures compared to ILP and Greedy baselines, with minimal increase in latency.}
	\label{fig:fig3}
\end{figure}

\subsection{Control Overhead}
Fig. \ref{fig:fig4} shows that the proposed approach results in slightly higher overhead in terms of number of messages passed between peer agents; however, the difference in the performance of the proposed AF and the ILP approach is negligible compared to the improvements in terms of latency and stability.

\begin{figure}[ht]
	\centering
	\includegraphics[width=\linewidth]{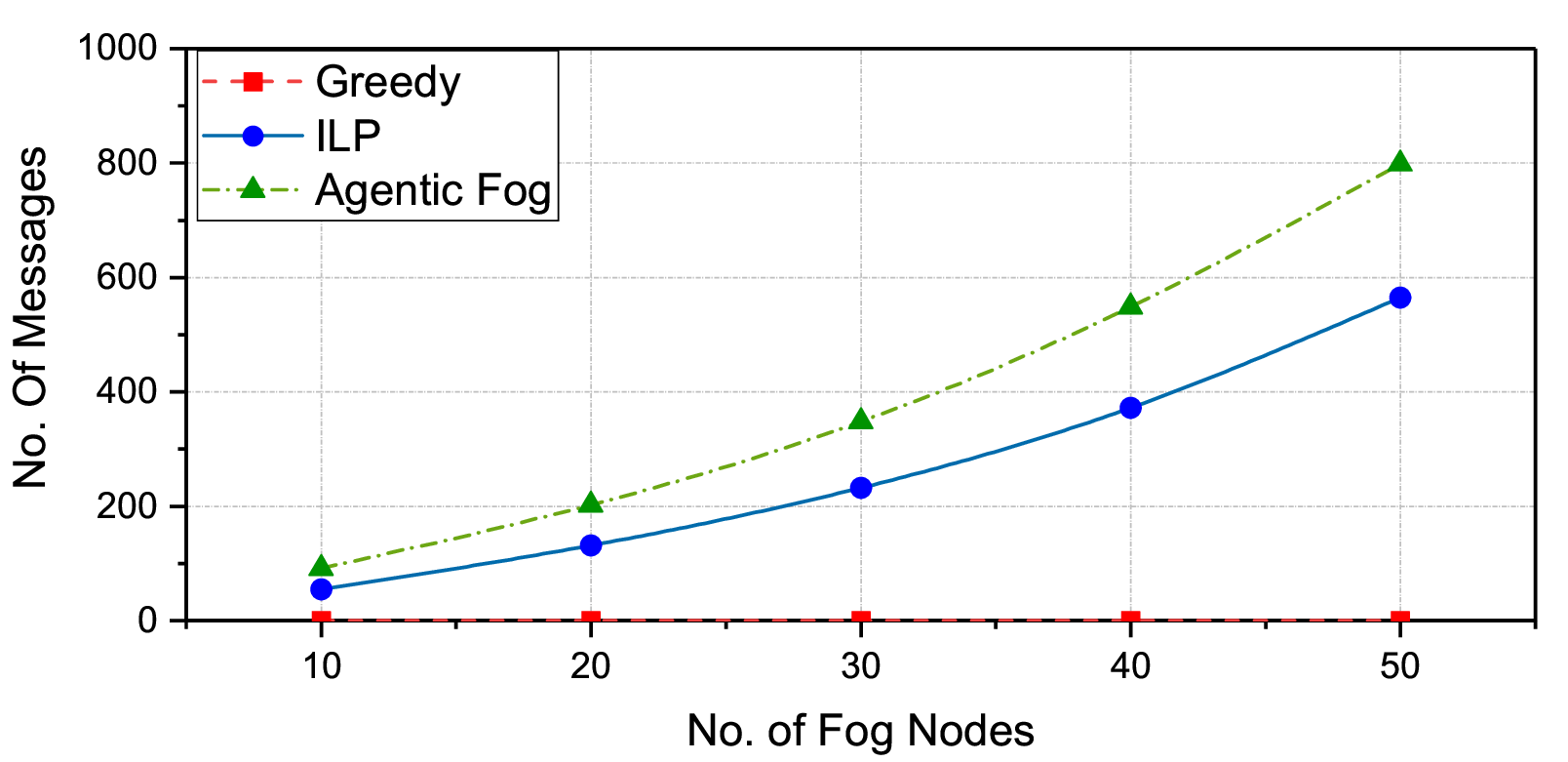}
	\caption{Coordination overhead relative to the number of fog nodes (FNs). The AF system incurs slightly more message exchange than the ILP and Greedy baselines.}
	\label{fig:fig4}
\end{figure}

\subsection{Sensitivity Analysis}
To evaluate the influence of historical context, we vary the shared memory size between 20 and 100 episodes. As shown in Fig. \ref{fig:fig5}, the performance of the AF system improves steadily with increased memory, particularly in latency reduction. However, the benefit plateaus beyond 100 episodes, indicating that additional memory yields diminishing returns in this setting.

\begin{figure}[ht]
	\centering
	\includegraphics[width=\linewidth]{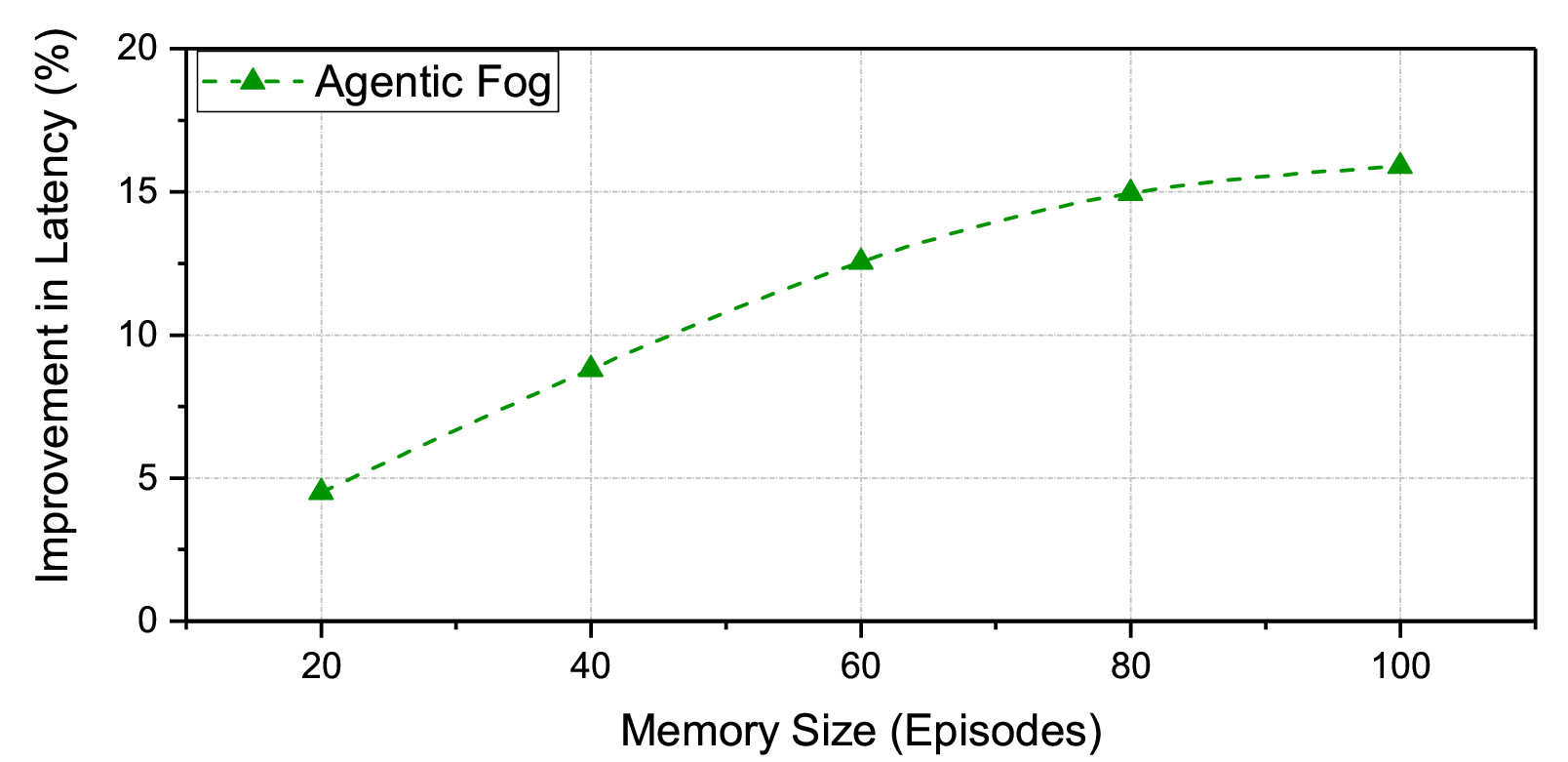}
	\caption{Impact of shared memory size on the performance of Agentic Fog. Larger memory windows yield improved latency, with diminishing returns beyond 100 episodes.}
	\label{fig:fig5}
\end{figure}

\section{Conclusion}
This paper established a rigorous theoretical and system-level foundation for AAI in fog computing environments. We model FNs as autonomous, policy-driven agents operating under partial observability and shared memory. We demonstrate that system-level intelligence emerges from localized decision-making that operates independently of central optimization and language-processing systems.

The proposed AF framework presents the formal specification of a coordination as an exact PG. This offers convergence guarantees in the case of asynchronous bounded-rational updates and ensures stability in the case of node failures. In contrast to classical fog control based on ILP, which has the assumption that everything can be observed, the aim of the AF is to optimize the system behavior over time. It also makes it possible to remain robust under varying demand and dynamic network conditions.

In addition to the performance gains, this work adds a conceptual contribution since it clarifies the difference between AAI and LLM-centric AAI frameworks. Our results demonstrate that there are agentic behaviors which are appropriate for infrastructure systems based on structured autonomy, a shared context, and coordinated policies. This clarification is especially applicable, however, to latency-sensitive, resource-constrained environments for which formal analyzability and predictable execution are necessary.

Simulation statistics confirm the theory and show better adaptivity, minimal coordination overhead, and graceful degradation in the case of partial system failures. Sensitivity analysis is further used in order to underline the trade-offs between memory, coordination frequency, and system stability. This also helps in reinforcing the practical relevance of the proposed AF system. 

In short this research reviews fog infrastructures as cognitive distributed systems and opens a principled ground for designing non-LLM agentic architectures in next generation edge and fog computing. 

{
	\renewcommand{\markboth}[2]{}
	\printbibliography

@article{bandi2025rise,
  title={The Rise of Agentic AI: A review of definitions, frameworks, architectures, applications, evaluation metrics, and challenges},
  author={Bandi, Ajay and Kongari, Bhavani and Naguru, Roshini and Pasnoor, Sahitya and Vilipala, Sri Vidya},
  journal={Future Internet},
  volume={17},
  number={9},
  pages={404},
  year={2025},
  publisher={MDPI}
}

@article{jin2025dynamic,
  title={Dynamic task allocation in fog computing using enhanced fuzzy logic approaches},
  author={Jin, Wanying and Rezaeipanah, Amin},
  journal={Scientific Reports},
  volume={15},
  number={1},
  pages={18513},
  year={2025},
  publisher={Nature Publishing Group UK London}
}

@article{abdelzaher2025bottlenecks,
  title={The bottlenecks of AI: challenges for embedded and real-time research in a data-centric age},
  author={Abdelzaher, Tarek and Hu, Yigong and Kara, Denizhan and Kimura, Tomoyoshi and Misra, Ashitabh and Ramani, Vishakha and Tardieu, Olivier and Wang, Tianshi and Wigness, Maggie and Youssef, Alaa},
  journal={Real-Time Systems},
  pages={1--52},
  year={2025},
  publisher={Springer}
}

@article{alsharif2025survey,
title = {Survey of energy-efficient fog computing: Techniques and recent advances},
journal = {Energy Reports},
volume = {13},
pages = {1739-1763},
year = {2025},
author = {Mohammed H. Alsharif and Abu Jahid and Raju Kannadasan and Manish Kumar Singla and Jyoti Gupta and Kottakkaran Sooppy Nisar and Abdel-Haleem Abdel-Aty and Mun-Kyeom Kim},
keywords = {Computing, Fog computing, Green computing, Energy-efficient computing, Green IoT, Task offloading, Scheduling, Virtualization}
}

@inproceedings{liu2022partially,
  title={When is partially observable reinforcement learning not scary?},
  author={Liu, Qinghua and Chung, Alan and Szepesv{\'a}ri, Csaba and Jin, Chi},
  booktitle={Conference on Learning Theory},
  pages={5175--5220},
  year={2022},
  organization={PMLR}
}

@article{yadav2024efficient,
  title={Efficient content distribution in fog-based CDN: A joint optimization algorithm for fog-node placement and content delivery},
  author={Yadav, Prateek and Kar, Subrat},
  journal={IEEE Internet of Things Journal},
  volume={11},
  number={9},
  pages={16578--16590},
  year={2024},
  publisher={IEEE}
}

@inproceedings{dell2025agentic,
  title={Agentic Federated LLMs at the Edge for Promptable Guardrail Inspection with YOLO-E},
  author={Dell'Acqua, Pierluigi and La Rosa, Francesco and Villari, Massimo},
  booktitle={Proceedings of the 18th IEEE/ACM International Conference on Utility and Cloud Computing},
  pages={1--7},
  year={2025}
}

@misc{zheng2026agentvne,
      title={AgentVNE: LLM-Augmented Graph Reinforcement Learning for Affinity-Aware Multi-Agent Placement in Edge Agentic AI}, 
      author={Runze Zheng and Yuqing Zheng and Zhengyi Cheng and Long Luo and Haoxiang Luo and Gang Sun and Hongfang Yu and Dusit Niyato},
      year={2026},
      eprint={2601.02021},
      archivePrefix={arXiv},
}

@article{ma2024efficient,
  title={Efficient and scalable reinforcement learning for large-scale network control},
  author={Ma, Chengdong and Li, Aming and Du, Yali and Dong, Hao and Yang, Yaodong},
  journal={Nature Machine Intelligence},
  volume={6},
  number={9},
  pages={1006--1020},
  year={2024},
  publisher={Nature Publishing Group UK London}
}

@article{vinueza2024fog,
	title={Fog Computing Technology Research: A Retrospective Overview and Bibliometric Analysis},
	author={Vinueza-Naranjo, Paola G and Chicaiza, Janneth and Rumipamba-Zambrano, Ruben},
	journal={ACM Computing Surveys},
	volume={57},
	number={4},
	pages={1--32},
	year={2024},
	publisher={ACM New York, NY}
}

@article{kostopoulos2025agentic,
	title={Agentic AI in Education: State of the Art and Future Directions},
	author={Kostopoulos, Georgios and Gkamas, Vasileios and Rigou, Maria and Kotsiantis, Sotiris},
	journal={IEEE Access},
	year={2025},
	publisher={IEEE}
}

@article{khaledian2024ai,
	title={AI-based \& heuristic workflow scheduling in cloud and fog computing: a systematic review},
	author={Khaledian, Navid and Voelp, Marcus and Azizi, Sadoon and Shirvani, Mirsaeid Hosseini},
	journal={Cluster Computing},
	volume={27},
	number={8},
	pages={10265--10298},
	year={2024},
	publisher={Springer}
}

@article{taleb2025survey,
	title={A Survey on Services Placement Algorithms in Integrated Cloud-Fog/Edge Computing},
	author={Taleb, Imane and Guillaume, Jean-Loup and Duthil, Benjamin},
	journal={ACM Computing Surveys},
	volume={57},
	number={11},
	pages={1--36},
	year={2025},
	publisher={ACM New York, NY}
}

@article{acharya2025agentic,
	title={Agentic AI: Autonomous Intelligence for Complex Goals—A Comprehensive Survey},
	author={Acharya, Deepak Bhaskar and Kuppan, Karthigeyan and Divya, B},
	journal={IEEE Access},
	year={2025},
	publisher={IEEE}
}

@article{brohi2025research,
	title={A Research Landscape of Agentic AI and Large Language Models: Applications, Challenges and Future Directions},
	author={Brohi, Sarfraz and Mastoi, Qurat-ul-ain and Jhanjhi, NZ and Pillai, Thulasyammal Ramiah},
	journal={Algorithms},
	volume={18},
	number={8},
	pages={499},
	year={2025},
	publisher={MDPI}
}

@article{yu2025hybrid,
	title={A hybrid evolutionary algorithm to improve task scheduling and load balancing in fog computing},
	author={Yu, Dongxian and Zheng, Weiyong},
	journal={Cluster Computing},
	volume={28},
	number={1},
	pages={74},
	year={2025},
	publisher={Springer}
}

@article{wu2024topology,
	title={Topology-aware Federated Learning in Edge Computing: A Comprehensive Survey},
	author={Wu, Jiajun and Dong, Fan and Leung, Henry and Zhu, Zhuangdi and Zhou, Jiayu and Drew, Steve},
	journal={ACM Computing Surveys},
	volume={56},
	number={10},
	pages={1--41},
	year={2024},
	publisher={ACM New York, NY}
}

@article{lera2024multi,
	title={Multi-objective application placement in fog computing using graph neural network-based reinforcement learning},
	author={Lera, Isaac and Guerrero, Carlos},
	journal={The Journal of Supercomputing},
	volume={80},
	number={19},
	pages={27073--27094},
	year={2024},
	publisher={Springer}
}

@article{mann2022decentralized,
	title={Decentralized Application Placement in Fog Computing},
	author={Mann, Zoltan Adam},
	journal={IEEE Transactions on Parallel and Distributed Systems},
	volume={33},
	number={12},
	pages={3262--3273},
	year={2022},
	publisher={IEEE}
}

@article{sapkota2025ai,
  title={AI Agents vs. Agentic AI: A conceptual taxonomy, applications and challenges},
  author={Sapkota, Ranjan and Roumeliotis, Konstantinos I and Karkee, Manoj},
  journal={arXiv preprint arXiv:2505.10468},
  year={2025}
}

@article{schneider2025generative,
  title={Generative to Agentic AI: Survey, Conceptualization, and Challenges},
  author={Schneider, Johannes},
  journal={arXiv preprint arXiv:2504.18875},
  year={2025},
 pages     = {1--37}

}

@article{dwivedi2025agentic,
  title={Agentic AI Systems: What It Is and Isn't},
  author={Dwivedi, Yogesh K and Helal, Mohamed YI and Elgendy, Ibrahim A and Alahmad, Rasha and Walton, Paul and Suh, Ayoung and Singh, Vinay and Jeon, Il},
  journal={Global Business and Organizational Excellence},
  year={2025},
  publisher={Wiley Online Library}
}

@article{murugesan2025rise,
  title={The Rise of Agentic AI: implications, concerns, and the path forward},
  author={Murugesan, San},
  journal={IEEE Intelligent Systems},
  volume={40},
  number={2},
  pages={8--14},
  year={2025},
  publisher={IEEE}
}

@article{Mao2022JSAC,
	title={A Survey on Mobile Edge Computing: The Communication Perspective},
	author={Mao, Yuyi and You, Changsheng and Zhang, Jun and Huang, Kaibin and Letaief, Khaled B},
	journal={IEEE communications surveys \& tutorials},
	volume={19},
	number={4},
	pages={2322--2358},
	year={2017},
	publisher={IEEE}
}

@book{Wooldridge2009MAS,
  author    = {Michael Wooldridge},
  title     = {An Introduction to MultiAgent Systems},
  publisher = {Wiley},
  edition   = {2},
  year      = {2009}
}

@book{Shoham2009MAS,
  author    = {Yoav Shoham and Kevin Leyton-Brown},
  title     = {Multiagent Systems: Algorithmic, Game-Theoretic, and Logical Foundations},
  publisher = {Cambridge University Press},
  year      = {2009}
}

@article{li2022applications,
	title={Reinforcement learning-based solution for resource management in fog computing: A comprehensive survey},
	author={Li, Tianxu and Zhu, Kun and Luong, Nguyen Cong and Niyato, Dusit and Wu, Qihui and Zhang, Yang and Chen, Bing},
	journal={IEEE Communications Surveys \& Tutorials},
	volume={24},
	number={2},
	pages={1240--1279},
	year={2022},
	publisher={IEEE}
}

@article{Monderer1996Potential,
  author    = {Dov Monderer and Lloyd S. Shapley},
  title     = {Potential Games},
  journal   = {Games and Economic Behavior},
  year      = {1996},
  volume    = {14},
  number    = {1},
  pages     = {124--143}
}

@book{Sutton2018RL,
  author    = {Richard S. Sutton and Andrew G. Barto},
  title     = {Reinforcement Learning: An Introduction},
  publisher = {MIT Press},
  edition   = {2},
  year      = {2018}
}

@book{Fudenberg1998Learning,
	author    = {Drew Fudenberg and David K. Levine},
	title     = {The Theory of Learning in Games},
	publisher = {MIT Press},
	year      = {1998}
}
}

\end{document}